\newcommand*\samethanks[1][\value{footnote}]{\footnotemark[#1]}
\g@addto@macro\bfseries{\boldmath}
\g@addto@macro\mdseries{\unboldmath}
\g@addto@macro\normalfont{\unboldmath}
\g@addto@macro\rmfamily{\unboldmath}
\g@addto@macro\upshape{\unboldmath}
\renewcommand*{\multicitedelim}{\addcomma\space}
\newcommand{\myhref}[1]{%
  \iffieldundef{doi}
    {\iffieldundef{url}
       {#1}
       {\href{\strfield{url}}{#1}}}
    {\href{http://dx.doi.org/\strfield{doi}}{#1}}%
}
    \newlength{\temp@x}%
    \newlength{\temp@y}%
    \newlength{\temp@w}%
    \newlength{\temp@h}%
    \def\my@coords#1#2#3#4{%
      \setlength{\temp@x}{#1}%
      \setlength{\temp@y}{#2}%
      \setlength{\temp@w}{#3}%
      \setlength{\temp@h}{#4}%
      \adjustlengths{}%
      \my@pdfliteral{\strip@pt\temp@x\space\strip@pt\temp@y\space\strip@pt\temp@w\space\strip@pt\temp@h\space re}}%
      \def\my@pdfliteral#1{\pdfliteral page{#1}}
      \def\adjustlengths{}%
      \def\my@pdfliteral #1{}
      \def\adjustlengths{\setlength{\temp@h}{-\temp@h}\addtolength{\temp@y}{1in}\addtolength{\temp@x}{-1in}}%
    \def\Hy@colorlink#1{%
      \begingroup
        \ifHy@ocgcolorlinks
          \def\Hy@ocgcolor{#1}%
          \my@pdfliteral{q}%
          \my@pdfliteral{7 Tr}
        \else
          \HyColor@UseColor#1%
        \fi
    }%
    \def\Hy@endcolorlink{%
      \ifHy@ocgcolorlinks%
        \my@pdfliteral{/OC/OCPrint BDC}%
        \my@coords{0pt}{0pt}{\pdfpagewidth}{\pdfpageheight}%
        \my@pdfliteral{F}
        %
        \my@pdfliteral{EMC/OC/OCView BDC}%
        \begingroup%
          \expandafter\HyColor@UseColor\Hy@ocgcolor%
          \my@coords{0pt}{0pt}{\pdfpagewidth}{\pdfpageheight}%
          \my@pdfliteral{F}
        \endgroup%
        \my@pdfliteral{EMC}%
        \my@pdfliteral{0 Tr}
        \my@pdfliteral{Q}%
      \fi
      \endgroup
    }%
\colorlet{DarkRed}{red!50!black}
\colorlet{DarkGreen}{green!50!black}
\colorlet{DarkBlue}{blue!50!black}
\declaretheorem[numberwithin=section]{theorem}
\declaretheorem[numberlike=theorem]{lemma}
\declaretheorem[numberlike=theorem]{proposition}
\declaretheorem[numberlike=theorem]{definition}
\crefname{algorithm}{Procedure}{Procedures}
\Crefname{algorithm}{Procedure}{Procedures}
\newcommand{\dist}{\ensuremath{d}}
\newcommand{\cP}{\mathcal{P}}
\newcommand{\cQ}{Q}
\title{Improved Algorithms for Decremental Single-Source Reachability on Directed Graphs\thanks{This paper was presented at the International Colloquium on Automata, Languages and Programming (ICALP) 2015. A full version combining the findings of this paper and its predecessor~\cite{HenzingerKNSTOC14} is available at \url{http://arxiv.org/abs/1504.07959}.}}
\author{Monika Henzinger\thanks{University of Vienna, Faculty of Computer Science, Austria. Supported by the Austrian Science Fund (FWF): P23499-N23 and the University of Vienna (IK \mbox{I049-N}). The research leading to these results has received funding from the European Research Council under the European Union's Seventh Framework Programme (FP/2007-2013) / ERC Grant Agreement no. 340506.}
 \and Sebastian Krinninger\samethanks[2]
 \and Danupon Nanongkai\thanks{KTH Royal Institute of Technology, Sweden. Work partially done while at University of Vienna, Faculty of Computer Science, Austria.}
}
\date{}
\begin{document}
\maketitle
\begin{abstract}
Recently we presented the first algorithm for maintaining the set of nodes reachable from a source node in a directed graph that is modified by edge deletions with $o(mn)$ total update  time,  where $m$ is the number of edges and $n$ is the number of nodes in the graph \citem[Henzinger et al.\ STOC 2014]{HenzingerKNSTOC14}.
The algorithm is a combination of several different algorithms, each for a different $m$ vs.~$n$ trade-off.
For the case of $m = \Theta(n^{1.5})$ the running time is $O(n^{2.47})$, just barely below $mn = \Theta(n^{2.5})$.
In this paper we simplify the previous algorithm using new algorithmic ideas and achieve an improved running time of $\tilde O(\min( m^{7/6} n^{2/3}, m^{3/4} n^{5/4 + o(1)}, m^{2/3} n^{4/3+o(1)} + m^{3/7} n^{12/7+o(1)}))$. 
This gives, e.g., $O(n^{2.36})$ for the notorious case $m = \Theta(n^{1.5})$.
We obtain the same upper bounds for the problem of maintaining the strongly connected components of a directed graph undergoing edge deletions.
Our algorithms are correct with high probabililty against an oblivious adversary.

\end{abstract}
\newpage


\section{Introduction}

In this paper we study the decremental reachability problem.
Given a directed graph $ G $ with $n$ nodes and $m$ edges and a source node $ s $ in $ G $ a {\em decremental single-source reachability algorithm} 
maintains the set of nodes reachable from $ s $ (i.e., all nodes $ v $ for which there is a path from $ s $ to $ v $ in the current version of $ G $) during a sequence of edge deletions.
The goal is to minimize the \emph{total update time}, i.e., the total time needed to process \emph{all} deletions such that reachability queries can be answered in constant time.
A {\em decremental $s$-$t$ reachability algorithm} is given a graph $ G $ undergoing edge deletions, a source node $ s $, and a sink node $ t $
and it determines after every deletion in $ G $ whether $ s $ can still reach $ t $.

\paragraph*{Related Work.}
The incremental version of the single-source reachability problem, in which edges are \emph{inserted} into the graph, can be solved with a total update time of $ O(m) $ by performing an incremental graph search, where $ m $ is the final number of edges.
Italiano~\cite{Italiano88} showed that in directed acyclic graphs the decremental problem can be solved in time $ O (m) $ as well.
In general directed graphs however, the problem could for a long time only be solved in time $ O (m n) $ using the more general decremental single-source shortest paths algorithm of Even and Shiloach~\cite{EvenS81,HenzingerK95,King99},
which maintains a breadth-first search tree rooted at $s$,
called {\em ES-tree}.
This upper bound of $ O (m n) $ is also achieved for the seemingly more complex decremental \emph{all-pairs} reachability problem (also known as transitive closure)~\cite{RodittyZ08,Lacki13}.
In the fully dynamic version of single-source reachability both insertions and deletions of edges are possible.
The matrix-multiplication based transitive closure algorithms of Sankowski~\cite{Sankowski04} give fully dynamic algorithms for single-source reachability and $s$-$t$ reachability with worst-case running times of $ O (n^{1.575}) $ and $ O (n^{1.495}) $ \emph{per update}, respectively.

These upper bounds have recently been complemented by Abboud and Vassilevska Williams \cite{AbboudW14} as follows.
For the decremental $s$-$t$ reachability problem, a combinatorial algorithm with a \emph{worst-case} running time of $ O (n^{2-\delta}) $ (for some $ \delta > 0 $) per update or query implies a faster combinatorial algorithm for Boolean matrix multiplication and, as has been shown by Vassilevska Williams and Williams~\cite{WilliamsW10}, for other problems as well.
(For non-combinatorial algorithms, Henzinger~et~al.~\cite{HenzingerKNS15} showed that there is no algorithm with worst-case $ O (n^{1-\delta}) $ update and $ O (n^{2-\delta}) $ query time, assuming the so-called Online Matrix-Vector Multiplication conjecture.) 
Furthermore, for the problem of maintaining the number of nodes reachable from a source under deletions (which our algorithms can do) a worst-case running time of $ O (m^{1-\delta}) $ (for some $ \delta > 0 $) per update or query falsifies the strong exponential time hypothesis.
Thus, amortization is indeed necessary to bypass these bounds.

In~\cite{HenzingerKNSTOC14} we recently improved upon the long-standing upper bound of $ O (m n) $ for decremental single-source reachability in directed graphs.
In particular, we developed several algorithms whose combined expected running time is polynomially faster than $ O (m n) $ for all values of $ m $ (i.e., for all possible densities of the initial graph).
By a reduction from single-source reachability, our results in~\cite{HenzingerKNSTOC14} immediately give an $ o (mn) $ algorithm for maintaining strongly connected components under edge deletions.
Previously, the fastest decremental algorithms for this problem had a total update time of $ O(m n) $ as well \cite{RodittyZ08,Lacki13,Roditty13}.

\paragraph*{Our Results.}
In this paper we improve upon the upper bounds provided in~\cite{HenzingerKNSTOC14}.
Furthermore, the running times achieved in this paper are arguably more natural than those in~\cite{HenzingerKNSTOC14}.
Although we previously broke the $ O (m n) $ barrier for all values of $ m $, we barely did so, giving a bound of $O(n^{2.47})$, when $ m = \Theta (n^{1.5}) $.
In this paper we also get a better improvement, namely $O(n^{2.36})$  in this notorious case.
In general, we can combine the algorithms of this paper to obtain a running time of $ O (m n^{0.9 + o(1)}) $, whereas in~\cite{HenzingerKNSTOC14} we obtained $ \tilde O (mn^{0.984}) $.

In~\cite{HenzingerKNSTOC14} the starting point was to solve the decremental $s$-$t$ reachability problem, which is also the case here.
For this problem we obtain two algorithms with total update times of $ \tilde O (\min (m^{5/4} n^{1/2}, m^{2/3} n^{4/3 + o(1)})) $ and $ O (m^{2/3} n^{4/3+o(1)} + m^{3/7} n^{12/7+o(1)}) $, respectively.
Just as in~\cite{HenzingerKNSTOC14}, extensions of these algorithms solve the decremental single-source reachability problem with total update times of $ \tilde O (\min (m^{7/6} n^{2/3}, m^{3/4} n^{5/4 + o(1)})) $ and $ O (m^{2/3} n^{4/3+o(1)} + m^{3/7} n^{12/7+o(1)}) $, respectively.
Furthermore, it follows from a reduction~\cite{RodittyZ08,HenzingerKNSTOC14} that there are algorithms for the decremental strongly connected components problem whose running times are the same up to a logarithmic factor.
We compare these new results to the ones of~\cite{HenzingerKNSTOC14} in Figure~\ref{fig:comparison_running_time}.
All our algorithms are correct with high probability who fixes its sequence of updates and queries
before the algorithm is initialized and their running time bounds hold in expectation.
Due to space constraints this paper only contains an overview of the algorithm that has a total update time of $ O (m^{2/3} n^{4/3+o(1)} + m^{3/7} n^{12/7+o(1)}) $ and is thus the current fastest for dense graphs.
The other algorithm and all omitted proofs can be found in the full version of this paper.

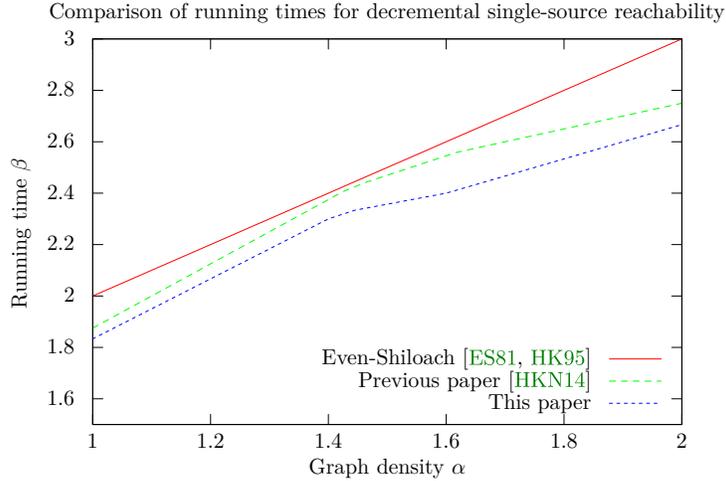
\begin{figure}[htbp]
\centering
\scalebox{0.75}{
\begin{tikzpicture}[gnuplot]
\path (0.000,0.000) rectangle (12.500,8.750);
\gpcolor{color=gp lt color border}
\gpsetlinetype{gp lt border}
\gpsetlinewidth{1.00}
\draw[gp path] (1.504,1.441)--(1.684,1.441);
\draw[gp path] (11.947,1.441)--(11.767,1.441);
\node[gp node right] at (1.320,1.441) { 1.6};
\draw[gp path] (1.504,2.353)--(1.684,2.353);
\draw[gp path] (11.947,2.353)--(11.767,2.353);
\node[gp node right] at (1.320,2.353) { 1.8};
\draw[gp path] (1.504,3.265)--(1.684,3.265);
\draw[gp path] (11.947,3.265)--(11.767,3.265);
\node[gp node right] at (1.320,3.265) { 2};
\draw[gp path] (1.504,4.177)--(1.684,4.177);
\draw[gp path] (11.947,4.177)--(11.767,4.177);
\node[gp node right] at (1.320,4.177) { 2.2};
\draw[gp path] (1.504,5.089)--(1.684,5.089);
\draw[gp path] (11.947,5.089)--(11.767,5.089);
\node[gp node right] at (1.320,5.089) { 2.4};
\draw[gp path] (1.504,6.001)--(1.684,6.001);
\draw[gp path] (11.947,6.001)--(11.767,6.001);
\node[gp node right] at (1.320,6.001) { 2.6};
\draw[gp path] (1.504,6.913)--(1.684,6.913);
\draw[gp path] (11.947,6.913)--(11.767,6.913);
\node[gp node right] at (1.320,6.913) { 2.8};
\draw[gp path] (1.504,7.825)--(1.684,7.825);
\draw[gp path] (11.947,7.825)--(11.767,7.825);
\node[gp node right] at (1.320,7.825) { 3};
\draw[gp path] (1.504,0.985)--(1.504,1.165);
\draw[gp path] (1.504,7.825)--(1.504,7.645);
\node[gp node center] at (1.504,0.677) { 1};
\draw[gp path] (3.593,0.985)--(3.593,1.165);
\draw[gp path] (3.593,7.825)--(3.593,7.645);
\node[gp node center] at (3.593,0.677) { 1.2};
\draw[gp path] (5.681,0.985)--(5.681,1.165);
\draw[gp path] (5.681,7.825)--(5.681,7.645);
\node[gp node center] at (5.681,0.677) { 1.4};
\draw[gp path] (7.770,0.985)--(7.770,1.165);
\draw[gp path] (7.770,7.825)--(7.770,7.645);
\node[gp node center] at (7.770,0.677) { 1.6};
\draw[gp path] (9.858,0.985)--(9.858,1.165);
\draw[gp path] (9.858,7.825)--(9.858,7.645);
\node[gp node center] at (9.858,0.677) { 1.8};
\draw[gp path] (11.947,0.985)--(11.947,1.165);
\draw[gp path] (11.947,7.825)--(11.947,7.645);
\node[gp node center] at (11.947,0.677) { 2};
\draw[gp path] (1.504,7.825)--(1.504,0.985)--(11.947,0.985)--(11.947,7.825)--cycle;
\node[gp node center,rotate=-270] at (0.246,4.405) {Running time $\beta$};
\node[gp node center] at (6.725,0.215) {Graph density $\alpha$};
\node[gp node center] at (6.725,8.287) {Comparison of running times for decremental single-source reachability};
\node[gp node right] at (10.479,2.148) {Even-Shiloach \cite{EvenS81,HenzingerK95}};
\gpcolor{color=gp lt color 0}
\gpsetlinetype{gp lt plot 0}
\draw[gp path] (10.663,2.148)--(11.579,2.148);
\draw[gp path] (1.504,3.265)--(1.609,3.311)--(1.715,3.357)--(1.820,3.403)--(1.926,3.449)%
  --(2.031,3.495)--(2.137,3.541)--(2.242,3.587)--(2.348,3.633)--(2.453,3.680)--(2.559,3.726)%
  --(2.664,3.772)--(2.770,3.818)--(2.875,3.864)--(2.981,3.910)--(3.086,3.956)--(3.192,4.002)%
  --(3.297,4.048)--(3.403,4.094)--(3.508,4.140)--(3.614,4.186)--(3.719,4.232)--(3.825,4.278)%
  --(3.930,4.324)--(4.036,4.370)--(4.141,4.417)--(4.247,4.463)--(4.352,4.509)--(4.458,4.555)%
  --(4.563,4.601)--(4.669,4.647)--(4.774,4.693)--(4.880,4.739)--(4.985,4.785)--(5.090,4.831)%
  --(5.196,4.877)--(5.301,4.923)--(5.407,4.969)--(5.512,5.015)--(5.618,5.061)--(5.723,5.107)%
  --(5.829,5.153)--(5.934,5.200)--(6.040,5.246)--(6.145,5.292)--(6.251,5.338)--(6.356,5.384)%
  --(6.462,5.430)--(6.567,5.476)--(6.673,5.522)--(6.778,5.568)--(6.884,5.614)--(6.989,5.660)%
  --(7.095,5.706)--(7.200,5.752)--(7.306,5.798)--(7.411,5.844)--(7.517,5.890)--(7.622,5.937)%
  --(7.728,5.983)--(7.833,6.029)--(7.939,6.075)--(8.044,6.121)--(8.150,6.167)--(8.255,6.213)%
  --(8.361,6.259)--(8.466,6.305)--(8.571,6.351)--(8.677,6.397)--(8.782,6.443)--(8.888,6.489)%
  --(8.993,6.535)--(9.099,6.581)--(9.204,6.627)--(9.310,6.673)--(9.415,6.720)--(9.521,6.766)%
  --(9.626,6.812)--(9.732,6.858)--(9.837,6.904)--(9.943,6.950)--(10.048,6.996)--(10.154,7.042)%
  --(10.259,7.088)--(10.365,7.134)--(10.470,7.180)--(10.576,7.226)--(10.681,7.272)--(10.787,7.318)%
  --(10.892,7.364)--(10.998,7.410)--(11.103,7.457)--(11.209,7.503)--(11.314,7.549)--(11.420,7.595)%
  --(11.525,7.641)--(11.631,7.687)--(11.736,7.733)--(11.842,7.779)--(11.947,7.825);
\gpcolor{color=gp lt color border}
\node[gp node right] at (10.479,1.755) {Previous paper \cite{HenzingerKNSTOC14}};
\gpcolor{color=gp lt color 1}
\gpsetlinetype{gp lt plot 1}
\draw[gp path] (10.663,1.755)--(11.579,1.755);
\draw[gp path] (1.504,2.695)--(1.609,2.753)--(1.715,2.810)--(1.820,2.868)--(1.926,2.925)%
  --(2.031,2.983)--(2.137,3.040)--(2.242,3.098)--(2.348,3.156)--(2.453,3.213)--(2.559,3.271)%
  --(2.664,3.328)--(2.770,3.386)--(2.875,3.443)--(2.981,3.501)--(3.086,3.559)--(3.192,3.616)%
  --(3.297,3.674)--(3.403,3.731)--(3.508,3.789)--(3.614,3.847)--(3.719,3.904)--(3.825,3.962)%
  --(3.930,4.019)--(4.036,4.077)--(4.141,4.134)--(4.247,4.192)--(4.352,4.250)--(4.458,4.307)%
  --(4.563,4.365)--(4.669,4.422)--(4.774,4.480)--(4.880,4.537)--(4.985,4.595)--(5.090,4.653)%
  --(5.196,4.710)--(5.301,4.768)--(5.407,4.825)--(5.512,4.883)--(5.618,4.940)--(5.723,4.998)%
  --(5.829,5.056)--(5.934,5.113)--(6.040,5.159)--(6.145,5.201)--(6.251,5.244)--(6.356,5.285)%
  --(6.462,5.320)--(6.567,5.354)--(6.673,5.389)--(6.778,5.424)--(6.884,5.459)--(6.989,5.494)%
  --(7.095,5.529)--(7.200,5.564)--(7.306,5.599)--(7.411,5.634)--(7.517,5.669)--(7.622,5.703)%
  --(7.728,5.738)--(7.833,5.773)--(7.939,5.808)--(8.044,5.833)--(8.150,5.856)--(8.255,5.879)%
  --(8.361,5.902)--(8.466,5.925)--(8.571,5.948)--(8.677,5.971)--(8.782,5.994)--(8.888,6.017)%
  --(8.993,6.040)--(9.099,6.063)--(9.204,6.086)--(9.310,6.109)--(9.415,6.132)--(9.521,6.155)%
  --(9.626,6.178)--(9.732,6.201)--(9.837,6.224)--(9.943,6.247)--(10.048,6.270)--(10.154,6.293)%
  --(10.259,6.317)--(10.365,6.340)--(10.470,6.363)--(10.576,6.386)--(10.681,6.409)--(10.787,6.432)%
  --(10.892,6.455)--(10.998,6.478)--(11.103,6.501)--(11.209,6.524)--(11.314,6.547)--(11.420,6.570)%
  --(11.525,6.593)--(11.631,6.616)--(11.736,6.639)--(11.842,6.662)--(11.947,6.685);
\gpcolor{color=gp lt color border}
\node[gp node right] at (10.479,1.362) {This paper};
\gpcolor{color=gp lt color 2}
\gpsetlinetype{gp lt plot 2}
\draw[gp path] (10.663,1.362)--(11.579,1.362);
\draw[gp path] (1.504,2.505)--(1.609,2.559)--(1.715,2.612)--(1.820,2.666)--(1.926,2.720)%
  --(2.031,2.774)--(2.137,2.827)--(2.242,2.881)--(2.348,2.935)--(2.453,2.989)--(2.559,3.042)%
  --(2.664,3.096)--(2.770,3.150)--(2.875,3.204)--(2.981,3.257)--(3.086,3.311)--(3.192,3.365)%
  --(3.297,3.419)--(3.403,3.472)--(3.508,3.526)--(3.614,3.580)--(3.719,3.633)--(3.825,3.687)%
  --(3.930,3.741)--(4.036,3.795)--(4.141,3.848)--(4.247,3.902)--(4.352,3.956)--(4.458,4.010)%
  --(4.563,4.063)--(4.669,4.117)--(4.774,4.171)--(4.880,4.225)--(4.985,4.278)--(5.090,4.332)%
  --(5.196,4.386)--(5.301,4.440)--(5.407,4.493)--(5.512,4.547)--(5.618,4.601)--(5.723,4.647)%
  --(5.829,4.681)--(5.934,4.716)--(6.040,4.750)--(6.145,4.785)--(6.251,4.805)--(6.356,4.824)%
  --(6.462,4.844)--(6.567,4.864)--(6.673,4.884)--(6.778,4.903)--(6.884,4.923)--(6.989,4.943)%
  --(7.095,4.963)--(7.200,4.982)--(7.306,5.002)--(7.411,5.022)--(7.517,5.042)--(7.622,5.061)%
  --(7.728,5.081)--(7.833,5.107)--(7.939,5.138)--(8.044,5.169)--(8.150,5.200)--(8.255,5.230)%
  --(8.361,5.261)--(8.466,5.292)--(8.571,5.322)--(8.677,5.353)--(8.782,5.384)--(8.888,5.414)%
  --(8.993,5.445)--(9.099,5.476)--(9.204,5.507)--(9.310,5.537)--(9.415,5.568)--(9.521,5.599)%
  --(9.626,5.629)--(9.732,5.660)--(9.837,5.691)--(9.943,5.722)--(10.048,5.752)--(10.154,5.783)%
  --(10.259,5.814)--(10.365,5.844)--(10.470,5.875)--(10.576,5.906)--(10.681,5.937)--(10.787,5.967)%
  --(10.892,5.998)--(10.998,6.029)--(11.103,6.059)--(11.209,6.090)--(11.314,6.121)--(11.420,6.151)%
  --(11.525,6.182)--(11.631,6.213)--(11.736,6.244)--(11.842,6.274)--(11.947,6.305);
\gpcolor{color=gp lt color border}
\gpsetlinetype{gp lt border}
\draw[gp path] (1.504,7.825)--(1.504,0.985)--(11.947,0.985)--(11.947,7.825)--cycle;
\gpdefrectangularnode{gp plot 1}{\pgfpoint{1.504cm}{0.985cm}}{\pgfpoint{11.947cm}{7.825cm}}
\end{tikzpicture}
}
\caption{Running times of decremental single-source reachability algorithms dependent on the density of the initial graph. A point $ (\alpha, \beta) $ in this diagram means that for a graph with $ m = \Theta (n^\alpha) $ the algorithm has a running time of $ O (n^{\beta + o(1)}) $.}\label{fig:comparison_running_time}
\end{figure}

\paragraph*{Techniques.}
There are two novel technical contributions: (1) The algorithm of~\cite{HenzingerKNSTOC14} uses two kinds of randomly selected nodes, called {\em hubs} and {\em centers}, each fulfilling a different purpose. Maintaining an ES-tree for
each hub up to depth $h$, it
quickly tests for every pair of centers $ (x, y) $ whether there is a path of length at most $2h$ from $ x $ to $ y $ going through a hub. If there is no such path, we build a special graph, called {\em path union graph}, for the pair $(x, y)$ that contains all paths
of length $O(h)$ from $x$ to $y$. Since there no longer is a path from $x$ to $y$ through a hub of length at most $h$, we know that their path union graph is ``smaller'' than the original graph. In this paper we show how to
extend this approach multiple layers of path unions graphs.
Hubs and centers of the previous 
algorithms become level $k$,
resp.~$k - 1$ centers in the new approach. Level $k - 1$ centers serve as hubs for the level $k - 2$ centers, and more generally level $i$ centers serve as hubs for level $i-1$ centers. To do this efficiently we build the
ES-tree for a level $i$ center $x$ {\em  inside} the path-union graph of $x$ and another, potentially higher-level center. The fact that we use the smaller path-union graph instead of the original graph for these ES-trees
(together with an improved data structure for computing path-union graphs, see (2) below)
gives the improvement in the running time. 

(2) In~\cite{HenzingerKNSTOC14} we maintain for each center $x$ an approximate path union data structure that computes a superset of the path union of $x$ and any other center~$y$. This superset is an approximation
of the path union graph for $(x,y)$ as it
 might contain paths between the two centers of length $O(h \log n)$ (and {\em not} as desired $O(h)$), but no longer. The total time spent
 in this data structure for $x$ is (a) the size of the constructed path union graph and (b)
 a one-time ``global charge'' for using this data structure of
 $ O(n^2)$. It is based on a hierarchical graph decomposition technique.
Here we present a much simpler data structure that also constructs an approximate path union graph, but that does not require any hierarchical graph decomposition. This reduces the global charge per center from $O(n^2)$ to $O(m)$. 
We believe that this data structure is of independent interest.

\paragraph*{Outline.}
In \Cref{sec:prelim} we give the preliminaries.
In \Cref{sec:approx_path_union_ds} we present our new path union data structure.
Finally, in \Cref{sec:dense_SSR} we show how to combine this idea with the multi-layer path union approach to obtain a faster decremental single-source reachability algorithm for dense graphs.

\section{Preliminaries}\label{sec:prelim}

In this section we review some notions and basic facts that we will use in the rest of this paper.
We use the following notation:
We consider a directed graph $ G = (V, E) $ undergoing edge deletions, where $ V $ is the set of nodes of $ G $ and $ E $ is the set of edges of $ G $.
We denote by $ n $ the number of nodes of $ G $ and by $ m $ the number of edges of~$ G $ \emph{before the first edge deletion}.
For every pair of nodes $ u $ and $ v $ we denote the distance from $ u $ to $ v $ in~$ G $ by $ \dist_G (u, v) $.
For every subset of nodes $ U \subseteq V $, we define $ E (U) = E \cap U^2 $ and denote by $ G[U] = (U, E[U]) $ the \emph{subgraph of $ G $ induced by $ U $}.
For sets of nodes $ U \subseteq V $ and $ U' \subseteq V $ we define $ E (U, U') = E \cap (U \times U') $, i.e., $ E (U, U') $ is the set of edges $ (u, v) \in E $ such that $ u \in U $ and $ v \in U' $.
We write $ \hat{O} (T (m, n)) $ as an abbreviation for $ O (T (m, n) \cdot n^{o(1)}) $.

Like many decremental shortest paths and reachability algorithms, our algorithms internally use a data structure for maintaining a shortest paths tree up to a relatively small depth.

\begin{theorem}[Even-Shiloach tree~\cite{EvenS81,HenzingerK95,King99}]\label{lem:ES-tree}
There is a decremental algorithm, called \emph{Even-Shiloach tree} (short: ES-tree), that, given a directed graph~$ G $ undergoing edge deletions, a source node~$ s $, and a parameter $ h \geq 1 $, maintains a shortest paths tree from $ s $ and the corresponding distances up to depth $ h $ with total update time $ O (m h) $, i.e., the algorithm maintains $ \dist_G (s, v) $ and the parent of $ v $ in the shortest paths tree for every node $ v $ such that $ \dist_G (s, v) \leq h $.
By reversing the edges of $ G $ it can also maintain the distance from $ v $ to $ s $ for every node $ v $ in the same time.
\end{theorem}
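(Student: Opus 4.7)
The plan is to maintain, for every node $ v $ with $ \dist_G(s,v) \leq h $, its current distance label $ d(v) $ together with a parent pointer in the tree, and additionally, for each node $ v $, the list of its incoming edges. Initialization is a single BFS from $ s $ truncated at depth $ h $, costing $ O(m) $ time. The correctness invariant we will maintain throughout the sequence of deletions is that $ d(v) = \dist_G(s,v) $ whenever this distance is at most $ h $, and $ d(v) = \infty $ otherwise; nodes with $ d(v) = \infty $ are simply treated as outside the tree.

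Next I would describe the update routine when an edge $ (u,v) $ is deleted. If $ (u,v) $ is not the current tree edge into $ v $, there is nothing to do. Otherwise $ v $ must look for a new parent: we scan the incoming edges of $ v $ for some $ (u',v) $ with $ d(u') = d(v)-1 $. If one is found, $ v $ adopts it as its parent and the deletion is processed. If none exists, the distance of $ v $ must strictly increase, so we increment $ d(v) $ and rescan incoming edges for an edge $ (u',v) $ with $ d(u') = d(v)-1 $, continuing this loop until either a suitable parent is found or $ d(v) $ exceeds $ h $, in which case $ v $ leaves the tree. Whenever $ d(v) $ increases, each out-neighbor $ w $ of $ v $ whose parent was $ v $ must also be reconsidered; these nodes are placed into a priority queue keyed by their current tentative distance, and the whole process continues, always processing the node with the smallest tentative distance first. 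Since the distance from $ s $ to any node is monotonically nondecreasing under edge deletions, the distance labels we maintain are never overestimates once a node is dequeued, which gives correctness.

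For the running time I would use the standard amortized argument. Each node's distance label can take only the values in $ \{0, 1, \dots, h, \infty\} $ and changes only by increasing, so it changes at most $ h+1 $ times over the whole deletion sequence. Each change of $ d(v) $ triggers work proportional to the number of incoming edges of $ v $ (to look for a new parent) plus the number of outgoing edges of $ v $ (to notify its children in the tree), i.e.\ $ O(\deg(v)) $. Summing, the total update time is $ O\bigl(h \sum_v \deg(v)\bigr) = O(mh) $, with priority-queue operations absorbed by a bucket-based implementation that runs in $ O(1) $ per operation since the keys lie in $ \{0,\dots,h+1\} $. The main obstacle in making this argument precise is ensuring that repeated rescans of the same incoming-edge list across multiple increments of $ d(v) $ remain within this bound; this is handled by the observation that each scan is charged against the \emph{new} value of $ d(v) $, and each value is assumed only once per node. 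The version that maintains distances \emph{to} $ s $ follows by running exactly the same algorithm on the reverse graph $ G^R $, which has the same edge count and is modified by the same deletions (applied to the reversed edges).
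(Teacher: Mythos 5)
The paper itself does not prove this theorem; it imports it from \cite{EvenS81,HenzingerK95,King99}, so your proposal has to be judged against the standard ES-tree argument. Your sketch follows that argument, but it has a genuine gap in the charging scheme. You charge the cost of scanning $v$'s incoming-edge list ``against the new value of $d(v)$,'' which only pays for scans that end with an increment of $d(v)$. Nothing pays for the scans in which the current tree edge into $v$ is deleted but a replacement parent at the \emph{same} level is found, so $d(v)$ does not change. As you describe it (restart the scan of the in-list on every such deletion), an adversary can make this expensive: give $v$ incoming edges from $q$ nodes whose level exceeds $d(v)-1$, placed at the front of the list, followed by $q$ in-neighbors at level exactly $d(v)-1$, and then delete the current tree edge into $v$ repeatedly. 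Each deletion forces a fresh scan over the $q$ useless edges before the next valid parent is found, $d(v)$ never changes, and the total work at $v$ is $\Theta(q^2)$ while $m h$ can be $\Theta(q)$ for constant $h$. So the bound $O\bigl(h\sum_v \deg(v)\bigr)$ does not follow from the argument as written.

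The missing idea is the pointer (or ``scan position'') mechanism that the classical proof relies on: each node $v$ keeps a position in its in-edge list, searches for a new parent by advancing this pointer only, and increments $d(v)$ and resets the pointer exactly when the list is exhausted. This needs an accompanying correctness claim, which is where monotonicity is really used: if an edge $(u',v)$ is skipped while $d(v)=\ell$, then $d(u') \geq \ell$ at that moment (any edge $(u',v)$ gives $d(u')\geq d(v)-1$, and it was rejected), and since labels never decrease under deletions, $(u',v)$ can never again satisfy $d(u')=\ell-1$ while $d(v)$ stays at $\ell$; hence never rescanning skipped edges at the same level is safe. With this, the total scanning work at $v$ per value of $d(v)$ is $O(\deg(v))$, and only then does summing over the at most $h+1$ values give $O(mh)$. (An equivalent fix is to maintain, for each $v$, a counter or list of in-neighbors at level $d(v)-1$, updated when neighbors' levels rise.) The rest of your sketch --- bucket queue over levels $\{0,\dots,h+1\}$, monotonicity of distances for correctness, and running the same algorithm on the reversed graph for distances to $s$ --- is fine.
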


The central concept in the algorithmic framework introduced in~\cite{HenzingerKNSTOC14} is the notion of the path union of a pair of nodes.
\begin{definition}\label{def:path_union_unweighted}
For every directed graph $ G $, every $ h\geq 1 $, and all pairs of nodes $ x $ and $ y $ of $ G $, the path union $ \cP (x, y, h, G) \subseteq V $ is the set containing all nodes that lie on some path $ \pi $ from $ x $ to $ y $ in $ G $ of weight at most $ h $.
\end{definition}
The path union has a simple characterization and can be computed efficiently.
\begin{lemma}[\cite{HenzingerKNSTOC14}]\label{pro:path_union_characterization}\label{lem:path_union_computation}
For every directed graph $ G $, every $ h \geq 1 $ and all pairs of nodes $ x $ and $ y $ of $ G $, we have $ \cP (x, y, h, G) = \{ v \in V \mid \dist_G (x, v) + \dist_G (v, y) \leq h \} $.
We can compute this set in time $ O (m) $.
\end{lemma}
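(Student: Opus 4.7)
The plan is to prove the set equality in both directions and then read off an $O(m)$ algorithm directly from the characterization.

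For the forward inclusion $\cP(x,y,h,G) \subseteq \{v : \dist_G(x,v) + \dist_G(v,y) \leq h\}$, I would take any $v$ witnessed by an $x$-to-$y$ path $\pi$ of weight at most $h$, split $\pi$ at $v$ into subpaths $\pi_1$ from $x$ to $v$ and $\pi_2$ from $v$ to $y$, and combine $\dist_G(x,v) \leq |\pi_1|$ and $\dist_G(v,y) \leq |\pi_2|$ with $|\pi_1|+|\pi_2| = |\pi| \leq h$. For the reverse inclusion, given $v$ with $\dist_G(x,v) + \dist_G(v,y) \leq h$, I would concatenate a shortest $x$-to-$v$ path with a shortest $v$-to-$y$ path to produce an $x$-to-$y$ walk through $v$ of weight at most $h$, which exhibits $v \in \cP(x,y,h,G)$.

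For the computation, the characterization reduces the task to two single-source shortest-path calls: a BFS from $x$ in $G$ yielding $\dist_G(x,\cdot)$ and a BFS from $y$ in the edge-reversed graph yielding $\dist_G(\cdot,y)$, each truncated at depth $h$, followed by a single linear scan collecting those nodes whose distance sum is at most $h$. Each BFS runs in $O(m+n)=O(m)$ time and so does the scan, giving the claimed linear bound.

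I expect the argument to be essentially routine; the one subtle point---and the closest thing to an obstacle---is that the object produced in the reverse inclusion is a walk rather than a simple path, so one needs to adopt (or justify) that \Cref{def:path_union_unweighted} admits such walks as witnesses. This is the natural interpretation in the reachability setting, where $\cP(x,y,h,G)$ is defined as a set of nodes rather than as a collection of simple paths, and it matches the usage of $\cP$ elsewhere in the paper.
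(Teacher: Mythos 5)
Your proof is correct and matches the standard argument for this lemma, which the paper itself omits and defers to \cite{HenzingerKNSTOC14} (and the full version): split the witnessing path at $v$ for one inclusion, concatenate shortest paths for the other, and compute the set with one forward BFS from $x$ and one backward BFS to $y$ plus a linear scan. Your remark about walks versus simple paths is the right reading of \Cref{def:path_union_unweighted}; the path union is used throughout via exactly the distance characterization, so nothing further is needed.
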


Our algorithms use randomization in the following way: by sampling a set of nodes with a sufficiently large probability we can guarantee that certain sets of nodes contain at least one of the sampled nodes with high probability.
To the best of our knowledge, the first use of this technique in graph algorithms goes back to Ullman and Yannakakis~\cite{UllmanY91}.

\begin{lemma}\label{lem:center_on_shortest_path}\label{lem:hitting_set_argument}
Let $ T $ be a set of size $ t $ and let $ S_1, S_2, \ldots, S_k $ be subsets of $ T $ of size at least $ q $.
Let~$ U $ be a subset of $ T $ that was obtained by choosing each element of $ T $ independently with probability $ p = (a \ln{(k t)}) / q $, for some parameter $ a $.
Then, for every $ 1 \leq i \leq k $, the set $ S_i $ contains a node of $ U $ with high probability (whp), i.e., probability at least $ 1 - 1/t^a $, and the size of $ U $ is $ O ((t \log{(k t)}) / q) $ in expectation.
\end{lemma}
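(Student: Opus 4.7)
The plan is to prove both claims by standard independent sampling arguments. For the first part, the strategy is: fix an index $i$, bound the probability that $S_i$ is missed by $U$ using independence, take a union bound over $i$, and verify that the resulting bound is at most $1/t^a$.

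First, I would fix some $1 \le i \le k$ and observe that, since each element of $T$ is placed into $U$ independently with probability $p$, the probability that $U$ avoids $S_i$ is
\[
\Pr[S_i \cap U = \emptyset] = (1-p)^{|S_i|} \le (1-p)^q \le e^{-pq},
\]
where the first inequality uses $|S_i| \ge q$ (together with $p \le 1$, which we may assume without loss of generality since otherwise $U = T$ trivially hits every $S_i$). Substituting the chosen value $p = (a \ln(kt))/q$ gives $e^{-pq} = e^{-a \ln(kt)} = (kt)^{-a}$. Hence for each individual $i$ the claim already holds, since $(kt)^{-a} \le t^{-a}$. To obtain the stronger joint statement that every $S_i$ simultaneously contains a sampled node, I would take a union bound over the $k$ sets:
\[
\Pr\bigl[\exists i : S_i \cap U = \emptyset\bigr] \le k \cdot (kt)^{-a} = \frac{1}{k^{a-1} t^a} \le \frac{1}{t^a},
\]
where the last inequality uses $a \ge 1$. (For $a < 1$ the statement is vacuous, so we may assume $a \ge 1$.)

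For the bound on $\mathbb{E}[|U|]$, I would write $|U| = \sum_{x \in T} \mathbf{1}[x \in U]$ as a sum of independent Bernoulli$(p)$ indicators and apply linearity of expectation to get
\[
\mathbb{E}[|U|] = t \cdot p = \frac{a \, t \ln(kt)}{q} = O\!\left(\frac{t \log(kt)}{q}\right),
\]
treating $a$ as a constant (or absorbing it into the $O(\cdot)$ notation).

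There is no real obstacle here; the only minor point to be careful about is the distinction between the per-$i$ statement and the joint ``with high probability'' guarantee, which is handled by the union bound and the factor of $\log(kt)$ (as opposed to just $\log t$) hidden in $p$. This is precisely the slack that makes the bound $k^{1-a} t^{-a}$ collapse to $t^{-a}$ for $a \ge 1$.
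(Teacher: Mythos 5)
Your proof is correct and is exactly the standard argument this lemma rests on (the paper defers the proof to its full version): bound $\Pr[S_i \cap U = \emptyset] \le (1-p)^q \le e^{-pq} = (kt)^{-a}$, union bound over the $k$ sets, and use linearity of expectation for $\mathbb{E}[|U|] = tp$. The only minor inaccuracy is the aside that the statement is ``vacuous'' for $a<1$ --- it is not vacuous, but the lemma is only invoked for a sufficiently large constant $a$, so assuming $a \ge 1$ is harmless and the rest of the argument stands.
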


\section{Approximate Path Union Data Structure}\label{sec:approx_path_union_ds}

In this section we present a data structure for a graph $ G $ undergoing edge deletions, a fixed node~$ x $, and a parameter~$ h $.
Given a node $ y $, it computes an ``approximation'' of the path union $ \cP (x, y, h, G) $.
Using a simple static algorithm the path union can be computed in time $ O(m) $ for each pair $ (x, y) $.
We give an (almost) output-sensitive data structure for this problem, i.e., using our data structure the time will be proportional to the size of the approximate path union which might be $ o(m) $.
Additionally, we have to pay a global cost of $ O (m) $ that is amortized over \emph{all} approximate path union computations for the node $ x $ and \emph{all} nodes $ y $.
This will be useful because in our reachability algorithm we can use probabilistic arguments to bound the size of the approximate path unions.

\begin{proposition}\label{prop:approximate path union}
There is a data structure that, given a graph $ G $ undergoing edge deletions, a fixed node~$ x $, and a parameter~$ h $, provides a procedure \ApproximatePathUnion such that, given sequence of nodes $ y_1, \ldots, y_k $, this procedure computes sets $ F_1, \ldots F_k $ guaranteeing $ \cP (x, y, h, G) \subseteq F_i \subseteq \cP (x, y, (\log{m} + 3) h, G) $ for all $ 1 \leq i \leq k $.
The total running time is $ O(\sum_{1 \leq i \leq k} |F_i| + m) $.
\end{proposition}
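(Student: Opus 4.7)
My plan is to design \ApproximatePathUnion as a radius-growing procedure. For each query $y$, I iteratively compute $F^{(j)} := \cP(x, y, jh, G)$ for $j = 1, 2, 3, \ldots$ using the characterization in \Cref{lem:path_union_computation}, and halt at the first index $j^*$ for which $|F^{(j^*+1)}| \le 2\,|F^{(j^*)}|$, returning $F^{(j^*+1)}$. If $F^{(1)} = \emptyset$ (so $\cP(x, y, h, G) = \emptyset$), the procedure returns $\emptyset$ immediately.

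Correctness follows from a doubling argument. The lower containment $F^{(j^*+1)} \supseteq \cP(x, y, h, G)$ is immediate since $j^* + 1 \ge 1$ and $\cP$ is monotone in its depth parameter. For the upper containment, if the halting test fails in rounds $1, \ldots, j^* - 1$, then $|F^{(j+1)}| > 2\,|F^{(j)}|$ at every such round, so $|F^{(j^*)}| > 2^{j^*-1}\,|F^{(1)}| \ge 2^{j^*-1}$. Together with $|F^{(j^*)}| \le m$ this forces $j^* \le \log_2 m + 1$, hence $j^* + 1 \le \log m + 3$, and thus $F^{(j^*+1)} \subseteq \cP(x, y, (\log m + 3)h, G)$.

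The main technical difficulty is the total running-time bound $O(\sum_i |F_i| + m)$. Applying \Cref{lem:path_union_computation} directly in each round costs $\Omega(m)$ per round and $\Omega(m \log m)$ per query, which is far too slow. I plan to compute the sets $F^{(j)}$ incrementally by maintaining combined forward BFS frontiers from $x$ and backward BFS frontiers towards $y$, extending them by $h$ additional layers per round. The exploration work in rounds $1, \ldots, j^* - 1$ telescopes into the work of round $j^*$; the latter touches only vertices in (a small neighborhood of) $F^{(j^*)}$, and by the halting property $|F^{(j^*+1)}| \le 2|F^{(j^*)}|$, the vertex-level work in the last round is $O(|F_i|)$.

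I expect the hardest step to be handling the edge-exploration work: in dense graphs, scanning the edges incident to the BFS frontier may exceed $|F^{(j^*+1)}|$, since many scanned edges lead outside the final path union. I plan to address this by maintaining persistent auxiliary structures across queries---for instance, per-vertex adjacency lists together with pointers to the next unexplored edge---which are updated only upon edge deletions. Since each of the $m$ edges can be processed $O(1)$ times across the entire sequence (once when first inspected, once upon deletion), the aggregate edge-exploration cost telescopes to $O(m)$, accounting for the additive term in the claimed running-time bound.
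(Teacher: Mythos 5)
Your correctness argument (the sandwich containments via the doubling of sizes) is fine, but the running-time part has a genuine gap, and it is exactly the difficulty that the paper's construction is designed to get around. Your plan computes, per query, exact path unions $F^{(j)} = \cP(x,y,jh,G)$ by growing forward/backward BFS balls in the \emph{whole} graph $G$. The claim that the exploration in the final round ``touches only vertices in a small neighborhood of $F^{(j^*)}$'' is not true: a BFS from $x$ (or backwards from $y$) to depth $O(h)$ may scan $\Omega(m)$ edges incident to vertices that lie in no $x$--$y$ path union at all (e.g.\ vertices reachable from $x$ that cannot reach $y$), so the per-query cost cannot be charged to $|F_i|$. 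Your fallback --- persistent adjacency-list pointers so that ``each edge is processed $O(1)$ times across the entire sequence'' --- does not hold either: the queries $y_1,\dots,y_k$ require separate searches, and the same edge can be scanned afresh for every $y_i$ (and again after deletions change distances), so the aggregate exploration cost is $\Omega(km)$ in the worst case, not $O(\sum_i |F_i| + m)$. Nothing in your scheme ever certifies that an edge scanned once and found useless will never be scanned again.

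The missing idea is the paper's persistent pruned vertex set $R(x)$, maintained across all queries for the fixed source $x$ with the invariant that it contains every node within distance $h$ of $x$. Each query runs the backward balls $B_i$ (with the doubling test on $|E(B_i)|$) \emph{inside} $G[R(x)]$, runs the forward BFS only inside $G[B_{i^*}]$ (so it is payable by the output), and then permanently deletes $X = B_{i^*-1}\setminus F$ from $R(x)$. The correctness lemma --- every node of $X$ is at distance more than $h$ from $x$, which crucially uses that $F$ is computed in $B_{i^*}$, one level deeper than $B_{i^*-1}$ --- is what makes this removal safe for all future queries (distances only grow under deletions), and it is what converts the otherwise unpayable backward-search work into a charge against edges incident to nodes removed from $R(x)$, each charged at most once, giving the global $O(m)$ term. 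Without such a provably-safe pruning mechanism, your approach does not achieve the claimed output-sensitive bound.
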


\subsection{Algorithm Description}

Internally, the data structure maintains a set $ R (x) $ of nodes, initialized with $ R (x) = V $, such that the following invariant is fulfilled at any time:
all nodes that can be reached from $ x $ by a path of length at most~$ h $ are contained in $ R (x) $ (but $ R (x) $ might contain other nodes as well).
Observe that thus $ R (x) $ contains the path union $ \cP (x, y, h, G) $ for every node $ y $.

To gain some intuition for our approach consider the following way of computing an approximation of the path union $ \cP (x, y, h, G) $ for some node $ y $.
First, compute $ B_1 = \{ v \in R(x) \mid \dist_{G[R (x)]} (v, y) \leq h \} $ using a backward breadth-first search (BFS) to $ y $ in  $ G[R(x)] $, the subgraph of~$ G $ induced by $ R (x) $.
Second, compute $ F = \{ v \in R(x) \mid \dist_{G[B_1]} (x, v) \leq h \} $ using a forward BFS from $ x $ in~$ G[B_1] $.
It can be shown that $ \cP (x, y, h, G) \subseteq F \subseteq \cP (x, y, 2 h, G) $.\footnote{Indeed, $ F $ might contain some node $ v $ with $ \dist_G (x, v) = h $ and $ \dist_G (v, y) = h $, but it will not contain any node $ w $ with either $ \dist_G (x, w) > h $ or $ \dist_G (w, y) > h $.}
Given $ B_1 $, we could charge the time for computing $ F $ to the set $ F $ itself, but we do not know how to pay for computing $ B_1 $ as $ B_1 \setminus F $ might be much larger than $ F $.

Our idea is to additionally identify a set of nodes $ X \subseteq \{ v \in V \mid \dist_G (x, v) > h \} $ and remove it from $ R (x) $.
Consider a second approach where we first compute $ B_1 $ as above and then compute $ B_2 = \{ v \in R(x) \mid \dist_{G[R(x)]} (v, y) \leq 2 h \} $ and $ F = \{ v \in R(x) \mid \dist_{G[B_2]} (x, v) \leq h \} $.
It can be shown that $ \cP (x, y, h, G) \subseteq F \subseteq \cP (x, y, 3 h, G) $.
Additionally, all nodes in $ X = B_1 \setminus F $ are at distance more than~$ h $ from $ x $ and therefore we can remove $ X $ from $ R (x) $.
Thus, we can charge the work for computing $ B_1 $ and $ F $ to $ X $ and $ F $, respectively.\footnote{Note that in our first approach removing $ B_1 \setminus F $ would not have been correct as $ F $ was computed w.r.t\ to $ G[B_1] $ and not w.r.t.\ $ G[B_2] $.}
However, we now have a similar problem as before as we do not know whom to charge for computing $ B_2 $.

We resolve this issue by simply computing $ B_i = \{ v \in R(x) \mid \dist_{G[R (x)]} (v, y) \leq i h \} $ for increasing values of $ i $ until we arrive at some $ i^* $ such that the size of $ B_{i^*} $ is at most double the size of $ B_{i^*-1} $.
We then return $ F = \{ v \in R(x) \mid \dist_{G[B_i]} (x, v) \leq h \} $ and charge the time for computing $ B_i $ to $ X = B_{i-1} \setminus F $ and~$ F $, respectively.
As the size of $ B_i $ can double at most $ O (\log{n}) $ times we have $ \cP (x, y, h, G) \subseteq F \subseteq \cP (x, y, O (h \log{n}), G) $, as we show below.
Procedure~\ref{alg:path_unions} shows the pseudocode of this algorithm.
Note that in the special case that $ x $ cannot reach $ y $ the algorithm returns the empty set.
In the analysis below, let $ i^* $ denotes the final value of $ i $ before Procedure~\ref{alg:path_unions} terminates.

\begin{procedure}

\caption{ApproximatePathUnion($y$)}
\label{alg:path_unions}

\tcp{All calls of \ApproximatePathUnion{$y$} use fixed $ x $ and $ h $.}

Compute $ B_1 = \{ v \in R(x) \mid \dist_{G[R(x)]} (v, y) \leq h \} $ \tcp{backward BFS \emph{\textbf{to}} $ y $ in subgraph induced by $ R(x) $}
\For{$ i = 2 $ \KwTo $ \lceil \log{m} \rceil + 1 $}{
	Compute $ B_i = \{ v \in R(x) \mid \dist_{G[R(x)]} (v, y) \leq i h \} $ \tcp{backward BFS \emph{\textbf{to}} $ y $ in subgraph induced by $ R(x) $}
	\If{$ | E (B_i) | \leq 2 | E (B_{i-1}) | $}{\label{lin:check_for_size}
		Compute $ F = \{ v \in B_i \mid \dist_{G[B_i]} (x, v) \leq h \} $ \tcp{forward BFS \emph{\textbf{from}} $ x $ in subgraph induced by $ B_i $}
		$ X \gets B_{i-1} \setminus F $, $ R (x) \gets R (x) \setminus X $\;
		\Return{$ F $}\;
	}
}
\end{procedure}

\subsection{Correctness}
We first prove Invariant~(I): the set $ R (x) $ always contains all nodes that are at distance at most $ h $ from $ x $ in $ G $.
This is true initially as we initialize $ R (x) $ to be $ V $ and we now show that it continues to hold because we only remove nodes at distance more than $ h $ from $ x $.
\begin{lemma}
If $ R (x) \subseteq \{ v \in V \mid \dist_G (x, v) \leq h \} $, then for every node $ v \in X $ removed from $ R (x) $, we have $ \dist_G (x, v) > h $.
\end{lemma}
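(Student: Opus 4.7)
The plan is a proof by contradiction. Fix the iteration in which \ApproximatePathUnion removes $X = B_{i-1} \setminus F$ from $R(x)$, and fix an arbitrary $v \in X$. By definition $v \in B_{i-1} \subseteq B_i$ but $v \notin F$, so $\dist_{G[B_i]}(x, v) > h$. I will assume for contradiction that $\dist_G(x, v) \leq h$ and derive a short $x$-to-$v$ path that lies entirely in $B_i$, contradicting $\dist_{G[B_i]}(x, v) > h$.

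Let $\pi = (x = w_0, w_1, \ldots, w_\ell = v)$ be a shortest $x$-to-$v$ path in $G$, so $\ell \leq h$. The key step is to show that every $w_j$ lies in $B_i$. Two facts will be combined for each $w_j$:

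\begin{enumerate}
\item Every $w_j$ is at distance at most $h$ from $x$ in $G$ (it lies on $\pi$), so by the hypothesised invariant every $w_j$ belongs to $R(x)$. In particular, the suffix $(w_j, w_{j+1}, \ldots, w_\ell = v)$ of $\pi$, which has length at most $h$, lies entirely in $R(x)$, so $\dist_{G[R(x)]}(w_j, v) \leq h$.
\item Since $v \in B_{i-1}$, we have $\dist_{G[R(x)]}(v, y) \leq (i-1)h$.
\end{enumerate}

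Concatenating these inside $G[R(x)]$ gives $\dist_{G[R(x)]}(w_j, y) \leq h + (i-1)h = ih$, which together with $w_j \in R(x)$ means $w_j \in B_i$. Hence $\pi$ is an $x$-to-$v$ walk of length at most $h$ inside $G[B_i]$, contradicting $\dist_{G[B_i]}(x, v) > h$. Therefore $\dist_G(x, v) > h$, as required.

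The only mildly delicate point is ensuring that the subpath argument goes through \emph{inside the induced subgraph} $G[R(x)]$ rather than just in $G$; this is exactly where the invariant on $R(x)$ is used, and it is why the proof needs both the distance-to-$y$ information encoded in $B_{i-1}$ and the hypothesis that $R(x)$ still contains every node within distance $h$ of $x$. I do not foresee any other obstacle.
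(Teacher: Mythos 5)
Your proof is correct and follows essentially the same route as the paper: assume $\dist_G(x,v)\le h$, use the invariant on $R(x)$ to place the whole shortest path in $G[R(x)]$, combine the suffix distance to $v$ with $\dist_{G[R(x)]}(v,y)\le (i-1)h$ to get every path node into $B_i$, and conclude $\dist_{G[B_i]}(x,v)\le h$, contradicting $v\notin F$. The only difference is cosmetic (you phrase the contradiction via the distance bound rather than via membership in $F$), so there is nothing to add.
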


\begin{proof}
Let $ v \in X = B_{i^*-1} \setminus F $ and assume by contradiction that $ \dist_G (x, v) \leq h $.
Since $ v \in B_{i^*-1} $ we have $ \dist_{G[R(x)]} (v, y) \leq (i^*-1) h $.
Now consider the shortest path $ \pi $ from $ x $ to~$ v $ in~$ G $, which has length at most $ h $.
By the assumption, every node on $ \pi $ is contained in $ G[R (x)] $.
Therefore, for every node $ v' $ on $ \pi $, we have $ \dist_{G[R(x)]} (v', v) \leq h $ and thus
\begin{equation*}
\dist_{G[R(x)]} (v', y) \leq \dist_{G[R(x)]} (v', v) + \dist_{G[R(x)]} (v, y) \leq h + (i^*-1) h \leq i^* h
\end{equation*}
which implies that $ v' \in B_{i^*} $.
Thus, every node on $ \pi $ is contained in $ B_{i^*} $.
As $ \pi $ is a path from $ x $ to $ v $ of length at most $ h $ it follows that $ \dist_{G[B_{i^*}]} (x, v) \leq h $.
Therefore $ v \in F $, which contradicts the assumption $ v \in X $.
\end{proof}

We now complete the correctness proof by showing that the set of nodes returned by the algorithm approximates the path union.

\begin{lemma}
Procedure~\ref{alg:path_unions} returns a set of nodes $ F $ such that $ \cP (x, y, h, G) \subseteq F \subseteq \cP (x, y, (\log{m} + 3) h, G) $.
\end{lemma}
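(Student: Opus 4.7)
The plan is to verify both set inclusions directly, using invariant~(I) from the preceding lemma: $ R (x) $ contains every node at distance at most $ h $ from $ x $ in $ G $. This invariant ensures that ``short'' $ x $-to-$ y $ paths in $ G $ are preserved in $ G[R(x)] $, which is the key to reconciling distances measured in the various induced subgraphs $ G[R(x)] $, $ G[B_{i^*}] $ and $ G $ itself.

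For the lower inclusion $ \cP (x, y, h, G) \subseteq F $, I would take $ v \in \cP (x, y, h, G) $ together with an $ x $-to-$ y $ path $ \pi $ of length at most $ h $ passing through $ v $. Every node on $ \pi $ lies within distance $ h $ from $ x $ in $ G $, so by invariant~(I) the whole path $ \pi $ lies in $ G[R(x)] $. The $ v $-to-$ y $ suffix of $ \pi $ has length at most $ h $, giving $ \dist_{G[R(x)]} (v, y) \leq h $, and hence $ v \in B_i $ for every $ i \geq 1 $. Moreover, each node $ u $ on the $ x $-to-$ v $ prefix of $ \pi $ satisfies $ \dist_{G[R(x)]} (u, y) \leq \dist_{G[R(x)]} (u, v) + \dist_{G[R(x)]} (v, y) \leq 2h \leq i^* h $ (since $ i^* \geq 2 $), so the whole prefix lies in $ B_{i^*} $. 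Consequently $ \dist_{G[B_{i^*}]} (x, v) \leq h $, i.e., $ v \in F $.

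For the upper inclusion $ F \subseteq \cP (x, y, (\log{m} + 3) h, G) $, I would simply chain the two BFS bounds. A node $ v \in F $ satisfies $ \dist_{G[B_{i^*}]} (x, v) \leq h $ and, since $ v \in B_{i^*} $, also $ \dist_{G[R(x)]} (v, y) \leq i^* h $. Concatenating gives an $ x $-to-$ y $ path through $ v $ in $ G $ of length at most $ (i^* + 1) h \leq (\lceil \log m \rceil + 2) h \leq (\log m + 3) h $, as required. It remains to observe that the loop actually produces an $ i^* $: if the test $ |E(B_i)| \leq 2 |E(B_{i-1})| $ failed at every iteration, then $ |E(B_{\lceil \log m \rceil + 1})| > 2^{\lceil \log m \rceil} |E(B_1)| $, which would exceed $ m $ whenever $ |E(B_1)| \geq 1 $; in the degenerate case $ |E(B_1)| = 0 $ a short induction shows $ B_i = B_1 $ for all $ i $ (no node adjacent to $ y $ can exist in $ G[R(x)] $), so the test is trivially satisfied at $ i = 2 $.

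The main delicacy I anticipate is in the first inclusion: one must apply invariant~(I) to the \emph{entire} short path through $ v $ rather than only to $ v $ itself, so that the backward BFS from $ y $ inside $ G[R(x)] $ certifies the $ v $-to-$ y $ suffix and, simultaneously, every node along the $ x $-to-$ v $ prefix has been pulled into $ B_{i^*} $ before the forward BFS from $ x $ inside $ G[B_{i^*}] $ is executed.
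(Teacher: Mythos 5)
Your proposal is correct and follows essentially the same route as the paper: invariant~(I) guarantees that any length-$\le h$ path from $x$ through $v$ to $y$ survives in $G[R(x)]$ and hence in $B_{i^*}$ (the paper notes each node's suffix distance to $y$ is at most $h$, so the path already lies in $B_1$, rather than your $2h \le i^* h$ bound, but this is immaterial), and the upper inclusion is the same chaining of $\dist_{G[B_{i^*}]}(x,v) \le h$ with $\dist(v,y) \le i^* h$ via the characterization $\cP(x,y,h',G) = \{v : \dist_G(x,v)+\dist_G(v,y) \le h'\}$. Your termination argument (including the $|E(B_1)|=0$ case) is a slightly more careful version of the paper's one-line doubling argument.
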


\begin{proof}
We first argue that the algorithm actually returns some set of nodes $ F $.
Note that in \Cref{lin:check_for_size} of the algorithm we always have $ | E (B_i) | \geq | E (B_{i-1}) | $ as $ B_{i-1} \subseteq B_i $.
As $ E (B_i) $ is a set of edges and the total number of edges is at most $ m $, the condition $ | E(B_i) | \leq | E (B_{i-1}) | $ therefore must eventually be fulfilled for some $ 2 \leq i \leq \lceil \log{m} \rceil + 1 $.

We now show that $ \cP (x, y, h, G) \subseteq F $.
Let $ v \in \cP (x, y, h, G) $, which implies that $ v $ lies on a path $ \pi $ from $ x $ to $ y $ of length at most $ h $.
For every node $ v' $ on $ \pi $ we have $ \dist_G (x, v') \leq h $, which by Invariant~(I) implies $ v' \in R (x) $.
Thus, the whole path $ \pi $ is contained in $ G[R (x)] $.
Therefore $ \dist_{G[R (x)]} (v', y) \leq h $ for every node $ v' $ on $ \pi $ which implies that $ \pi $ is contained in $ G[B_{i^*}] $.
Then clearly we also have $ \dist_{G[B_{i^*}]} (x, v) \leq h $ which implies $ v \in F $.

Finally we show that $ F \subseteq \cP (x, y, (\log{m} + 3) h, G) $ by proving that $ \dist_G (x, v) + \dist_G (v, y) \leq (\log{m} + 3) h $ for every node $ v \in F $.
As $ G [B_{i^*}] $ is a subgraph of~$ G $, we have $ \dist_G (x, v) \leq \dist_{G[B_{i^*}]} (x, v) $ and $ \dist_G (v, y) \leq \dist_{G[B_{i^*}]} (v, y) $.
By the definition of $ F $ we have $ \dist_{G[B_{i^*}]} (x, v) \leq h $.
As $ F \subseteq B_{i^*} $ we also have $ \dist_{G[B_{i^*}]} (v, y) \leq i^* h \leq (\lceil \log{m} \rceil + 1) h \leq (\log{m} + 2) h $.
It follows that $ \dist_G (x, v) + \dist_G (v, y) \leq {h + (\log{m} + 2) h} = {(\log{m} + 3) h} $.
\end{proof}

\subsection{Running Time Analysis}
To bound the total running time we prove that each call of Procedure~\ref{alg:path_unions} takes time proportional to the number of edges in the returned approximation of the path union plus the number of edges incident to the nodes removed from $ R (x) $.
As each node is removed from $ R(x) $ at most once, the time spent on \emph{all} calls of Procedure~\ref{alg:path_unions} is then $ O (m) $ plus the sizes of the subgraphs induced by the approximate path unions returned in each call.

\begin{lemma}\label{lem:running_time_approximate_path_union}
The running time of Procedure~\ref{alg:path_unions} is $ O (| E (F) | + | E (X, R(x)) | + | E (R(x), X) |) $ where $ F $ is the set of nodes returned by the algorithm, and $ X $ is the set of nodes the algorithm removes from $ R(x) $.
\end{lemma}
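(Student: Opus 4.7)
The plan is to charge the cost of every BFS performed by Procedure~\ref{alg:path_unions} to edges either in $E(F)$ or incident to the removed set $X$. As routine bounds I would first observe that each backward BFS producing $B_j$ runs inside $G[R(x)]$ up to depth $jh$ and touches only edges incident to $B_j$, so it runs in time $O(|E(B_j)|)$; likewise the final forward BFS computing $F$ runs inside $G[B_{i^*}]$ up to depth $h$ and each $v \in F \setminus \{x\}$ is reached along an edge $(u,v)$ with $u \in F$, so it runs in time $O(|E(F)|)$.

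The key step will be a geometric-series argument using the loop's stopping rule in \Cref{lin:check_for_size}. For every iteration $j$ with $2 \le j < i^*$ the check failed, so $|E(B_j)| > 2 |E(B_{j-1})|$; telescoping gives $|E(B_{j})| < |E(B_{i^*-1})|/2^{i^*-1-j}$, and hence $\sum_{j=1}^{i^*-1}|E(B_j)| < 2|E(B_{i^*-1})|$. At iteration $i^*$ the loop did halt, so $|E(B_{i^*})| \le 2|E(B_{i^*-1})|$. Combining these with the per-BFS bound, the total time of Procedure~\ref{alg:path_unions} is $O(|E(B_{i^*-1})| + |E(F)|)$.

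It then remains to bound $|E(B_{i^*-1})|$ by the quantities in the statement. Since $X = B_{i^*-1}\setminus F$, we have $B_{i^*-1} \subseteq F \cup X$, so every edge $(u,v) \in E(B_{i^*-1})$ either has both endpoints in $F$ (contributing to $|E(F)|$), has $u \in X$ (contributing to $|E(X, R(x))|$, using $v \in B_{i^*-1} \subseteq R(x)$), or has $v \in X$ (contributing to $|E(R(x), X)|$). This yields $|E(B_{i^*-1})| \le |E(F)| + |E(X, R(x))| + |E(R(x), X)|$, which together with the previous paragraph proves the lemma. The main subtlety is the geometric bookkeeping: the algorithm does work on intermediate sets $B_1,\ldots,B_{i^*-1}$ that are not directly mentioned in the lemma statement, and the stopping rule must be used precisely to funnel all of that effort into $|E(B_{i^*-1})|$, which is then charged to $F$ and $X$ via the case analysis above.
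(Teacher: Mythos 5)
Your proof is correct and follows essentially the same route as the paper's: bound each BFS by $O(|E(B_j)|)$, use the doubling condition in \Cref{lin:check_for_size} to collapse the geometric sum to $O(|E(B_{i^*-1})|)$, and then split $E(B_{i^*-1})$ via $B_{i^*-1}\subseteq F\cup X$ into $E(F)$, $E(X,R(x))$, and $E(R(x),X)$. The only cosmetic differences are that you charge the forward BFS directly to $E(F)$ (the paper charges it to $E(B_{i^*})$) and you omit the (dominated) cost of forming $X$ and removing it from $R(x)$.
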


\begin{proof}
The running time in iteration $ 2 \leq j \leq i^*-1 $ is $ O(| E (B_j) |) $ as this is the cost of the breadth-first-search performed to compute $ B_j $.
In the last iteration $ i^* $, the algorithm additionally has to compute $ F $ and $ X $ and remove $ X $ from $ R (x) $.
As $ F $ is computed by a BFS in $ G[B_{i^*}] $ and $ X \subseteq B_{i^*-1} \subseteq B_{i^*} $, these steps take time $ O(| E (B_{i^*}) |) $.
Thus the total running time is $ O (\sum_{1 \leq j \leq i^*} | E (B_j) |) $.

By checking the size bound in Line~\ref{lin:check_for_size} of Procedure~\ref{alg:path_unions} we have $ |E (B_j) | > 2 | E (B_{j-1}) | $ for all $ 1 \leq j \leq i^*-1 $ and $ | E (B_{i^*}) | \leq 2 | E (B_{i^*-1}) | $.
By repeatedly applying the first inequality it follows that $ \sum_{1 \leq j \leq i^*-1} | E (B_j) | \leq 2 | E (B_{i^*-1}) | $.
Therefore we get
\begin{multline*}
\sum_{1 \leq j \leq i^*} | E (B_j) | = \sum_{1 \leq j \leq i^*-1} | E (B_j) | + | E (B_{i^*}) | \\
\leq 2 | E(B_{i^*-1}) | + 2 | E (B_{i^*-1}) | = 4 | E (B_{i^*-1}) |
\end{multline*}
and thus the running time is $ O (| E (B_{i^*-1}) |) $.
Now observe that by $ X = B_{i^*-1} \setminus F $ we have $ B_{i^*-1} \subseteq X \cup F $ and thus
\begin{align*}
E (B_{i^*-1}) &\subseteq E (F) \cup E (X) \cup E (X, F) \cup E (F, X) \\
 &\subseteq E (F) \cup E (X, R(x)) \cup E (R(x), X) \, .
\end{align*}
Therefore the running time is $ O (| E (F)| + | E (X, R(x)) | + | E (R(x), X) |) $.
\end{proof}

\section{Reachability via Center Graph}\label{sec:dense_SSR}

We now show how to combine the approximate path union data structure with a hierarchical approach to get an improved decremental reachability algorithm for dense graphs.
The algorithm has a parameter $ 1 \leq k \leq \log{n} $ and for each $ 1 \leq i \leq k $ a parameter $ c_i \leq n $.
We determine suitable choices of these parameters in \Cref{sec:running_time_dense}.
For each $ 1 \leq i \leq k-1 $, our choice will satisfy $ c_i \geq c_{i+1} $ and $ c_i = \hat{O} (c_{i+1}) $.
Furthermore, we set $ h_i = (3 + \log{m})^{i-1} n / c_1 $ for $ 1 \leq i \leq k $.
At the initialization, the algorithm determines sets of nodes $ C_1 \supseteq C_2 \supseteq \dots \supseteq C_k $ such that $ s, t \in C_1 $ as follows.
For each $ 1 \leq i \leq k $, we sample each node of the graph with probability $ a c_i \ln{n} / n $ (for a large enough constant $ a $), where the value of~$ c_i $ will be determined later.
The set $ C_i $ then consists of the sampled nodes, and if $ i \leq k-1 $, it additionally contains the nodes in $ C_{i+1} $.
For every $ 1 \leq i \leq k $ we call the nodes in $ C_i $ $i$-centers.
In the following we describe an algorithm for maintaining pairwise reachability between all $1$-centers.

\subsection{Algorithm Description}

\paragraph{Data Structures.}
The algorithm uses the following data structures:
\begin{itemize}
\item For every $i$-center $ x $ and every $ i \leq j \leq k $ an approximate path union data structure (see \Cref{prop:approximate path union}) with parameter $ h_j $.
\item For every $k$-center $ x $ an incoming and an outgoing ES-tree of depth $ h_k $ in $ G $.
\item For every pair of an $i$-center $ x $ and a $j$-center $ y $ such that $ l := \max(i, j) \leq k-1 $, a set of nodes $ \cQ (x, y, l) \subseteq V $.
Initially, $ \cQ (x, y, l) $ is empty and at some point the algorithm might compute $ \cQ (x, y, l) $ using the approximate path union data structure of $ x $.
\item For every pair of an $i$-center $ x $ and a $j$-center $ y $ such that $ l := \max(i, j) \leq k-1 $ an ES-tree of depth $ h_l $ from $ x $ in $ \cQ (x, y, l) $.
\item For every pair of an $i$-center $ x $ and a $j$-center $ y $ such that $ l := \max(i, j) \leq k-1 $ a set of $(l+1)$-centers certifying that $ x $ can reach $ y $.

\end{itemize}

\paragraph{Certified Reachability Between Centers (Links).}
The algorithm maintains the following limited path information between centers, called \emph{links}, in a top-down fashion.
Let $ x $ be a $k$-center and let $ y $ be an $i$-center for some $ 1 \leq i \leq k - 1 $.
The algorithm links $ x $ to $ y $ if and only if $ y $ is contained in the outgoing ES-tree of depth $ h_k $ of $ x $.
Similarly the algorithm links $ y $ to $ x $ if and only if $ y $ is contained in the incoming ES-tree of depth $ h_k $ of $ x $.
Let $ x $ be an $i$-center and let $ y $ be a $j$-center such that $ l := \max(i, j) \leq k-1 $.
If there is an $(l+1)$-center $ z $ such that $ x $ is linked to $ z $ and $ z $ is linked to $ y $, the algorithm links $ x $ to $ y $ (we also say that $ z $ links $ x $ to $ y $).
Otherwise, the algorithm computes $ \cQ (x, y, l) $ using the approximate path union data structure of $ x $ and starts to maintain an ES-tree from $ x $ up to depth $ h_l $ in $ G[\cQ (x, y, l)] $.
It links $ x $ to $ y $ if and only if $ y $ is contained in the ES-tree of $ x $.
Using a list of centers $ z $ certifying that $ x $ can reach $ y $, maintaining the links between centers is straightforward.

\paragraph{Center Graph.}
The algorithm maintains a graph called \emph{center graph}.
Its nodes are the $1$-centers and it contains the edge $ (x, y) $ if and only if $ x $ is linked to $ y $.
The algorithm maintains the transitive closure of the center graph.
A query asking whether a center $ y $ is reachable from a center $ x $ in $ G $ is answered by checking the reachability in the center graph.
As $ s $ and $ t $ are $1$-centers this answers $s$-$t$ reachability queries.

\paragraph*{Correctness.}
For the algorithm to be correct we have to show that there is a path from $ s $ to $ t $ in the center graph if and only if there is a path from $ s $ to $ t $ in $ G $.
We can in fact show more generally that this is the case for any pair of $1$-centers.

\begin{lemma}
For every pair of $1$-centers $ x $ and $ y $, there is a path from $ x $ to $ y $ in the center graph if and only if there is a path from $ x $ to $ y $ in $ G $.
\end{lemma}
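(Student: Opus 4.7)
The lemma is an if-and-only-if, which I will handle in two directions. For the easy direction (a center-graph path implies a $G$-path), I will show by structural induction on the recursive definition of the link relation that whenever $u$ is linked to $v$, $u$ reaches $v$ in $G$. In the base case $u$ is a $k$-center and the link places $v$ in the depth-$h_k$ outgoing ES-tree of $u$, which is built on $G$ itself. In the recursive case, $u$ is an $i$-center and $v$ is a $j$-center with $l = \max(i,j) \le k - 1$, and the link is witnessed either by an $(l+1)$-center $z$ with $u$ linked to $z$ and $z$ linked to $v$ (so by the inductive hypothesis on each half, $u$ reaches $z$ and $z$ reaches $v$ in $G$), or by $v$ lying in the depth-$h_l$ ES-tree of $u$ in $G[\cQ(u,v,l)]$, which is a subgraph of $G$. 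Concatenating along any center-graph path from $x$ to $y$ then produces an $x$-to-$y$ walk in $G$.

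For the hard direction (a $G$-path implies a center-graph path), I will first isolate the auxiliary claim: whenever $u$ is an $i$-center, $v$ is a $j$-center, $l = \max(i,j)$, and $\dist_G(u,v) \le h_l$, the algorithm links $u$ to $v$. For $l = k$ the depth-$h_k$ ES-trees in $G$ directly certify this. For $l < k$ either some $(l+1)$-center witnesses the link (immediate), or the algorithm has constructed $\cQ(u,v,l)$ together with a depth-$h_l$ ES-tree from $u$ in $G[\cQ(u,v,l)]$; by \Cref{prop:approximate path union} and \Cref{pro:path_union_characterization} the set $\cQ(u,v,l) \supseteq \cP(u,v,h_l,G)$ contains every node on every length-$\le h_l$ path from $u$ to $v$ in $G$, so $\dist_{G[\cQ(u,v,l)]}(u,v) \le h_l$ and the ES-tree certifies the link (here I use monotonicity of $\cP$ under edge deletions to argue that $\cQ$ constructed at an earlier time still contains the current path union). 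Given a $G$-path $\pi$ from a 1-center $x$ to a 1-center $y$, I then invoke \Cref{lem:hitting_set_argument} with the length-$h_1$ contiguous windows of nodes along $\pi$ as the subsets (each of size $h_1 = n/c_1$, with $C_1$ sampled at rate $a c_1 \ln n / n$): with high probability every such window contains a 1-center, yielding a sequence $x = x_0, x_1, \ldots, x_t = y$ of 1-centers on $\pi$ with $\dist_G(x_j, x_{j+1}) \le h_1$ for every $j$. The auxiliary claim applied at $l = 1$ then makes each consecutive pair an edge of the center graph, producing the desired path.

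The delicate step I foresee is the constant-factor calibration of the hitting set: the window size at which \Cref{lem:hitting_set_argument} succeeds whp must be at most $h_1$, so the sampling constant $a$ and the definition $h_1 = n/c_1$ must be chosen consistently, and the argument must union-bound over all (pair, time) events in the deletion sequence. If the constants match only up to a factor, one must either rescale $h_1$ (and hence all $h_i$) or invoke the auxiliary claim at distance $O(h_1)$; I expect this calibration to be absorbed into the parameter choices fixed in \Cref{sec:running_time_dense}.
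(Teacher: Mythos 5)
Your proposal is correct and follows essentially the argument intended by the paper (whose proof is deferred to the full version): soundness of links by induction towards level $k$, and completeness via the deterministic claim that centers at distance at most $h_l$ are linked (using \Cref{prop:approximate path union} plus monotonicity of path unions under deletions) combined with the hitting-set subdivision of a path by $1$-centers at spacing $h_1$, with the union bound justified by the oblivious adversary. The calibration issue you flag is indeed absorbed by the constant $a$ in the sampling probability, since \Cref{lem:hitting_set_argument} only needs $p = \Theta(\log n / h_1)$ for polynomially many fixed events.
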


\subsection{Running Time Analysis}\label{sec:running_time_dense}

The key to the efficiency of the algorithm is to bound the size of the graphs $ \cQ (x, y, l) $.

\begin{lemma}\label{lem:bound_on_size_of_path_union}
Let $ x $ be an $i$-center and let $ y $ be a $j$-center such that $ l := \max (i, j) \leq k-1 $.
If $ x $ is not linked to $ y $ by an $(l+1)$-center, then $ \cQ (x, y, l) $ contains at most $ n / c_{l+1} $ nodes with high probability.
\end{lemma}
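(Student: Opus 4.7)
My plan is to combine the approximation guarantee of \Cref{prop:approximate path union} with the hitting-set sampling of \Cref{lem:hitting_set_argument}, and then derive a contradiction with the assumption that no $(l+1)$-center links $x$ to $y$.

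First I would observe that, since $\cQ(x,y,l)$ is produced by the approximate path union data structure of $x$ invoked with parameter $h_l$, \Cref{prop:approximate path union} yields
\[
    \cQ(x,y,l) \;\subseteq\; \cP\bigl(x,\,y,\,(\log m + 3)\,h_l,\,G\bigr) \;=\; \cP(x,y,h_{l+1},G),
\]
where the equality uses the choice $h_{l+1} = (3+\log m)\,h_l$. Suppose for contradiction that $|\cQ(x,y,l)| > n/c_{l+1}$; then also $|\cP(x,y,h_{l+1},G)| > n/c_{l+1}$. Since every node is placed into the set of $(l+1)$-centers independently with probability $a c_{l+1} \ln n / n$, \Cref{lem:hitting_set_argument} implies that with high probability (after a union bound over the polynomially many pairs of centers and polynomially many moments at which the sets $\cQ$ are recomputed) some $(l+1)$-center $z$ lies in $\cP(x,y,h_{l+1},G)$, whence $\dist_G(x,z) \leq h_{l+1}$ and $\dist_G(z,y) \leq h_{l+1}$.

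To finish I must show that such a $z$ actually links $x$ to $y$, i.e., that $x$ is linked to $z$ and $z$ is linked to $y$. I would establish the following auxiliary claim by downward induction on $l' \in \{l+1,\dots,k\}$: \emph{if $u$ and $v$ are centers with $\max(\mathrm{level}(u),\mathrm{level}(v)) = l'$ and $\dist_G(u,v) \leq h_{l'}$, then $u$ is linked to $v$.} The base case $l' = k$ is immediate, because at least one of $u,v$ is a $k$-center and the algorithm maintains outgoing and incoming ES-trees of depth $h_k$ at every $k$-center in $G$. For the inductive step at level $l' < k$, either $u$ is already linked to $v$ via some $(l'+1)$-center and we are done, or the algorithm has built $\cQ(u,v,l')$ and maintains an ES-tree from $u$ of depth $h_{l'}$ in $G[\cQ(u,v,l')]$; in the latter case a witnessing path of length at most $h_{l'}$ from $u$ to $v$ in $G$ lies entirely in $\cP(u,v,h_{l'},G) \subseteq \cQ(u,v,l')$ (by \Cref{prop:approximate path union}), so $\dist_{G[\cQ(u,v,l')]}(u,v) \leq h_{l'}$ and the ES-tree witnesses the link.

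Applying this claim to the pairs $(x,z)$ and $(z,y)$, both at level $l+1$, shows that $x$ is linked to $z$ and $z$ is linked to $y$, so the $(l+1)$-center $z$ links $x$ to $y$, contradicting the hypothesis of the lemma. I expect the main obstacle to be setting up the inductive claim cleanly: one must be careful to argue that in the ``otherwise'' branch the witnessing short path genuinely survives inside the smaller induced subgraph $G[\cQ(u,v,l')]$, and this is precisely where the \emph{lower} containment half of \Cref{prop:approximate path union}, namely $\cP(u,v,h_{l'},G) \subseteq \cQ(u,v,l')$, is used -- in contrast to the \emph{upper} containment half, which drove the very first step of the proof.
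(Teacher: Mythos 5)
Your proof is correct and takes essentially the same route as the paper's own argument: contraposition via the hitting-set lemma applied to the deterministic superset $\cP(x,y,h_{l+1},G)$ (using $h_{l+1}=(3+\log m)h_l$), combined with the top-down claim that centers at distance at most $h_{l'}$ are linked, whose key step is exactly the containment $\cP(u,v,h_{l'},G)\subseteq \cQ(u,v,l')$ plus the depth-$h_k$ ES-trees at level $k$. Two cosmetic remarks: your inductive hypothesis is never actually used (the case analysis alone suffices), and the survival of the witnessing path inside $G[\cQ(u,v,l')]$ also relies on the fact that, under deletions, the current path union is contained in the one at the time $\cQ(u,v,l')$ was computed.
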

With the help of this lemma we first analyze the running time of each part of the algorithm and argue that our choice of parameters gives the desired total update time.

\paragraph{Parameter Choice.}
We carry out the running time analysis with regard to two parameters $ 1 \leq b \leq c \leq n $ which we will set at the end of the analysis.
We set
$ k = \lceil (\log{(c/b)}) / (\sqrt{\log{n} \cdot \log{\log{n}}} ) \rceil + 1 $,
$ c_k = b $ and $ c_{i} = 2^{\sqrt{\log{n} \cdot \log{\log{n}}}} c_{i+1} = \hat{O} (c_{i+1}) $ for $ 1 \leq i \leq k-1 $.
Note that the number of $i$-centers is $ \tilde O (c_i) $ in expectation.
Observe that
\begin{multline*}
(3 + \log{m})^{k-1} = O ((\log{n})^k) \leq O ((\log{n})^{\sqrt{\log{n} / \log{\log{n}}}}) \\ = O (2^{\sqrt{\log{n} \cdot \log{\log{n}}}}) = O (n^{\sqrt{\log{\log{n}} / \log{n}}}) = O (n^{o(1)}) \, .
\end{multline*}
Furthermore we have
\begin{equation*}
c_1 = \left( 2^{\sqrt{\log{n} \cdot \log{\log{n}}}} \right)^{k-1} c_k \geq 2^{\log{(c / b)}} b = \frac{c}{b} \cdot b = c
\end{equation*}
and by setting $ k' = (\log{(c/b)}) / (\sqrt{\log{n} \cdot \log{\log{n}}}) $ we have $ k \leq k' + 2 $ and thus
\begin{equation*}
c_1 = \left( 2^{\sqrt{\log{n} \cdot \log{\log{n}}}} \right)^{k-1} c_k \leq \left( 2^{\sqrt{\log{n} \cdot \log{\log{n}}}} \right)^{k'+1} c_k = 2^{\sqrt{\log{n} \cdot \log{\log{n}}}} c = \hat O (c) \, .
\end{equation*}
Remember that $ h_i = (3 + \log{m})^{i-1} n / c_1 $ for $ 1 \leq i \leq k $.
Therefore we have $ h_i = \hat{O} (n / c_1) = \hat{O} (n / c) $.

\paragraph{Maintaining ES-Trees.}
For every $k$-center we maintain an incoming and an outgoing ES-tree of depth $ h_k $, which takes time $ O (m h_k) $.
As there are $ \tilde O (c_k) $ $k$-centers, maintaining all these trees takes time $ \tilde O (c_k m h_k) = \hat O (b m n / c) $.

For every $i$-center $ x $ and every $j$-center $ y $ such that $ l := \max (i, j) \leq k-1 $, we maintain an ES-tree up to depth $ h_l $ in $ G[\cQ (x, y, l)] $.
By \Cref{lem:bound_on_size_of_path_union} $ \cQ (x, y, l) $ has at most $ n / c_{l+1} $ nodes and thus $ G[\cQ (x, y, l)] $ has at most $ n^2 / c_{l+1}^2 $ edges.
Maintaining this ES-tree therefore takes time $ O ((n^2 / c_{l+1}^2) \cdot h_l) = \hat O (n^2 / c_{l+1}^2 (n / c_1)) = \hat O (n^3 / (c_1 c_{l+1}^2)) $.
In total, maintaining all these trees takes time
\begin{multline*}
\hat{O} \left( \sum_{1 \leq i \leq k-1} \sum_{1 \leq j \leq i} c_i c_j \frac{n^3}{c_1 c_{i+1}^2} \right) =
\hat{O} \left( \sum_{1 \leq i \leq k-1} \sum_{1 \leq j \leq i} \frac{c_i c_1 n^3}{c_{i+1} c_1 c_k} \right) \\
= \hat{O} \left( \sum_{1 \leq i \leq k-1} \sum_{1 \leq j \leq i} \frac{n^3}{c_k} \right)
= \hat{O} \left( k^2 \frac{n^3}{c_k} \right) = \hat{O} \left(\frac{n^3}{b} \right) \, .
\end{multline*}

\paragraph{Computing Approximate Path Unions.}
For every $i$-center $ x $ and every $ i \leq j \leq k $ we maintain an approximate path union data structure with parameter $ h_j $.
By \Cref{prop:approximate path union} this data structures has a total running time of $ O (m) $ and an additional cost of $ O (|E (\cQ (x, y, j)) |) $ each time the approximate path union $ \cQ (x, y, j) $ is computed for some $j$-center $ y $.
By \Cref{lem:bound_on_size_of_path_union} the number of nodes of $ \cQ (x, y, j) $ is $ n / c_{j+1} $ with high probability and thus its number of edges is $ n^2 / c_{j+1}^2 $.
Therefore, computing all approximate path unions takes time
\begin{multline*}
\tilde O \left( \sum_{1 \leq i \leq k-1} \sum_{i \leq j \leq k} \left( c_i m + c_i c_j \frac{n^2}{c_{j+1}^2} \right) \right) \! =
\tilde O \left( \sum_{1 \leq i \leq k-1} \sum_{i \leq j \leq k} \left( c_1 m + \frac{c_1 c_j n^2}{c_{j+1} c_k} \right) \right) \\
= \hat O \left( \sum_{1 \leq i \leq k-1} \sum_{i \leq j \leq k} \left( c_1 m + \frac{c_1 n^2}{c_k} \right) \right)
\! = \hat O ( k^2 c_1 m + k^2 c_1 n^2 / c_k )
\! = \hat O ( c m + c n^2 / b ) \, .
\end{multline*}

\paragraph{Maintaining Links Between Centers.}
For each pair of an $i$-center $ x $ and a $j$-center $ y $ there are at most $ \tilde O (c_{l+1}) $ $(l+1)$-centers that can possibly link $ x $ to $ y $.
Each such $(l+1)$-center is added to and removed from the list of $(l+1)$-centers linking $ x $ to $ y $ at most once.
Thus, the total time needed for maintaining all these links is $ \tilde O (\sum_{1 \leq i \leq k-1} \sum_{1 \leq j \leq i} c_i c_j c_{i+1} ) = \tilde O (k^2 c_1^3) = \tilde O (c^3) $.

\paragraph{Maintaining Transitive Closure in Center Graph.}
The center graph has $ \tilde O (c_1) $ nodes and thus $ \tilde O (c_1^2) $ edges.
During the algorithm edges are only deleted from the center graph and never inserted.
Thus we can use known $ O(mn) $-time decremental algorithms for maintaining the transitive closure~\cite{RodittyZ08,Lacki13} in the center graph in time $ \tilde O (c_1^3) = \tilde O (c^3) $.

\paragraph{Total Running Time.}
Since the term $ c n^2 / b$ is dominated by the term $ n^3 / b $, we obtain a total running time of $\hat{O} \left( b m n / c + n^3 / b + c m + c^3 \right) $.
By setting $ b = n^{5/3} / m^{2/3} $ and $ c = n^{4/3} / m^{1/3} $ the running time is $ \hat{O} (m^{2/3} n^{4/3} + n^4 / m) $ and by setting $ b = n^{9/7} / m^{3/7} $ and $ c = m^{1/7} n^{4/7} $ the running time is $ \hat{O} (m^{3/7} n^{12/7} + m^{8/7} n^{4/7}) $.

\subsection{Decremental Single-Source Reachability}

The algorithm above works for a set of randomly chosen centers.
Note that the algorithm stays correct if we add any number of nodes to $ C_1 $, thus increasing the number of $1$-centers for which the algorithm maintains pairwise reachability.
If the number of additional centers does not exceed the expected number of randomly chosen centers, then the same running time bounds still apply.
Using the reductions of~\cite{HenzingerKNSTOC14} this immediately implies decremental algorithms for maintaining single-source reachability and strongly connected components.

\begin{theorem}
There are decremental algorithms for maintaining single-source reachability and strongly connected components with constant query time and expected total update time
$ \hat{O} (m^{2/3} n^{4/3} + m^{3/7} n^{12/7}) $
that are correct with high probability against an oblivious adversary.
\end{theorem}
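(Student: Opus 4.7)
The proof combines the pairwise $1$-center reachability algorithm of \Cref{sec:dense_SSR} with the black-box reductions from~\cite{HenzingerKNSTOC14}. My plan is, first, to argue that the algorithm of \Cref{sec:dense_SSR} survives the deterministic insertion of the source $s$ (and, in the SCC case, any other desired node) into $C_1$, and, second, to invoke the two reductions.

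For the first step, I would inspect the running-time analysis in \Cref{sec:running_time_dense}: every bound there is expressed purely in terms of the cardinalities $|C_i|$, and the sampling randomness enters only through \Cref{lem:bound_on_size_of_path_union}, which bounds $|\cQ(x,y,l)|$ using the sampling probability of the $(l{+}1)$-centers rather than the $1$-centers. Consequently, inserting any set of deterministically chosen nodes into $C_1$, provided the resulting cardinality remains $\tilde{O}(c_1)$, preserves both the correctness guarantees and the total update time $\hat{O}(m^{2/3}n^{4/3} + m^{3/7}n^{12/7})$ for maintaining pairwise reachability between all members of $C_1$. In particular, I may freely add $s$ to $C_1$ without altering the asymptotic analysis.

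For the second step, I would apply the SSR reduction of~\cite{HenzingerKNSTOC14}. Informally, once pairwise reachability among all $1$-centers (including $s$) is available, the reachability from $s$ to an arbitrary non-center node $v$ can be decided in constant time by combining this information with the incoming and outgoing ES-trees of depth $h_k$ maintained at each $k$-center, which are already part of the algorithm of \Cref{sec:dense_SSR}. The hitting-set argument of \Cref{lem:hitting_set_argument} applied to $k$-centers ensures whp that every sufficiently long $s$-to-$v$ path contains a $k$-center within distance $h_k$ of $v$, so that either a short ES-tree or the center graph produces the answer. For the SCC statement I would additionally invoke the reduction from decremental SCC to decremental SSR established in~\cite{RodittyZ08,HenzingerKNSTOC14}, which incurs only a logarithmic overhead.

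The main conceptual obstacle---already addressed by the design of \Cref{sec:dense_SSR}---is ensuring that the auxiliary "last-hop" structures demanded by the reduction come for free. This holds because the incoming/outgoing $k$-center ES-trees of depth $h_k = \hat{O}(n/c)$, needed to cover nodes that are not themselves centers, are precisely the structures that \Cref{sec:dense_SSR} already maintains inside the budget $\hat{O}(c_k m h_k) = \hat{O}(b m n / c)$. No additional data structure is introduced by the reduction, and so the two parameter choices $(b,c) = (n^{5/3}/m^{2/3}, n^{4/3}/m^{1/3})$ and $(b,c) = (n^{9/7}/m^{3/7}, m^{1/7}n^{4/7})$ from \Cref{sec:running_time_dense} carry over verbatim, yielding the claimed $\hat{O}(m^{2/3}n^{4/3} + m^{3/7}n^{12/7})$ expected total update time for both problems.
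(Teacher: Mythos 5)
Your first step is fine and is exactly the paper's own argument: adding deterministically chosen nodes such as $s$ to $C_1$ (as long as their number stays within the expected number of sampled $1$-centers) affects neither correctness nor the time bounds, because the randomness is only exploited through the sampling of the higher-level sets in \Cref{lem:bound_on_size_of_path_union}; the paper then simply invokes the reductions/extensions of \cite{HenzingerKNSTOC14} as black boxes to get single-source reachability and, via \cite{RodittyZ08,HenzingerKNSTOC14}, strongly connected components.

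The gap is in your second step, where you replace that black box by a concrete mechanism that does not work. The $k$-centers are sampled with probability $\Theta(c_k \log n / n)$, so \Cref{lem:hitting_set_argument} only guarantees that they hit sets of size $\Omega(n/c_k)$, whereas
$h_k = (3+\log m)^{k-1} n/c_1 = (n/c_k)\bigl((3+\log m)/2^{\sqrt{\log n \cdot \log\log n}}\bigr)^{k-1} \ll n/c_k$
for $k \geq 2$ (which is the case under the paper's parameter choices). Hence it is \emph{not} true w.h.p.\ that every long $s$-to-$v$ path contains a $k$-center among its last $h_k$ nodes, and your query rule (``$v$ lies in the outgoing depth-$h_k$ tree of some $k$-center reachable from $s$ in the center graph, or in a short tree from $s$'') can wrongly declare reachable nodes unreachable. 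Repairing this within your scheme is not cheap: making $k$-centers valid ``last-hop'' covers needs trees of depth $\Theta(n/c_k)$, costing $\tilde\Theta(c_k \cdot m \cdot n/c_k)=\tilde\Theta(mn)$, and using $1$-centers (whose sampling rate does match depth $h_1$) needs depth-$h_1$ trees from $\tilde O(c_1)$ centers in $G$, again $\tilde\Theta(mn)$ --- precisely the barrier being avoided. So your claim that the single-source extension ``comes for free from structures already maintained'' is unsubstantiated; note also that this extension is not generically free, since for the paper's other algorithm it worsens the bound from $\tilde O(m^{5/4}n^{1/2})$ to $\tilde O(m^{7/6}n^{2/3})$. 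The correct route, and the one the paper takes, is to cite the actual single-source and SCC reductions of \cite{HenzingerKNSTOC14} (whose additional data structures and charging arguments, given in the full version, are what keep the dense bound unchanged) rather than the hitting-set-plus-$k$-center-tree argument you sketch.
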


\printbibliography[heading=bibintoc] 
\end{document}